\documentclass[letterpaper]{article} 
\usepackage{aaai23}  
\usepackage{times}  
\usepackage{helvet}  
\usepackage{courier}  
\usepackage[hyphens]{url}  
\usepackage{graphicx} 
\urlstyle{rm} 
\usepackage{natbib}  
\usepackage{caption} 
\frenchspacing  
\setlength{\pdfpagewidth}{8.5in}  
\setlength{\pdfpageheight}{11in}  
%
\usepackage{booktabs}
\usepackage{algorithm}
\usepackage{algcompatible}
\usepackage{bm}
\usepackage{amssymb}
\usepackage{amsmath}
\usepackage{amsthm}
\usepackage{subfig}
\usepackage{makecell}
\usepackage{multirow}
\usepackage{url}
\usepackage{color}

\algloopdefx{RETURN}[1][]{\textbf{return} #1}

\newtheorem{theorem}{Theorem}
\newtheorem{lemma}{Lemma}

\theoremstyle{definition}
\newtheorem{definition}{Definition}

%
\usepackage{newfloat}
\usepackage{listings}
\DeclareCaptionStyle{ruled}{labelfont=normalfont,labelsep=colon,strut=off} 
\lstset{%
	basicstyle={\footnotesize\ttfamily},
	numbers=left,numberstyle=\footnotesize,xleftmargin=2em,
	aboveskip=0pt,belowskip=0pt,%
	showstringspaces=false,tabsize=2,breaklines=true}
\floatstyle{ruled}
\newfloat{listing}{tb}{lst}{}
\floatname{listing}{Listing}
%
\pdfinfo{
/TemplateVersion (2023.1)
}

\setcounter{secnumdepth}{0} 

%


\title{Submodular Maximization under the Intersection of \\ Matroid and Knapsack Constraints}
\author{
    Yu-Ran Gu,
    Chao Bian,
    Chao Qian
}
\affiliations{
    \textsuperscript{\rm}State Key Laboratory for Novel Software Technology, Nanjing University, Nanjing 210023, China\\


    \{guyr, bianc, qianc\}@lamda.nju.edu.cn
%
}

\usepackage{bibentry}

\begin{document}

\maketitle

\begin{abstract}
Submodular maximization arises in many applications, and has attracted a lot of research attentions from various areas such as artificial intelligence, finance and operations research. Previous studies mainly consider only one kind of constraint, while many real-world problems often involve several constraints. In this paper, we consider the problem of submodular maximization under the intersection of two commonly used constraints, i.e., $k$-matroid constraint and $m$-knapsack constraint, and propose a new algorithm SPROUT by incorporating partial enumeration into the simultaneous greedy framework. We prove that SPROUT can achieve a polynomial-time approximation guarantee better than the state-of-the-art algorithms. Then, we introduce the random enumeration and smooth techniques into SPROUT to improve its efficiency, resulting in the SPROUT++ algorithm, which can keep a similar approximation guarantee. Experiments on the applications of movie recommendation and weighted max-cut demonstrate the superiority of SPROUT++ in practice.
\end{abstract}

\section{Introduction}
Submodular maximization, i.e., maximization of a set function which satisfies the diminishing returns property under some constraints, arises in many applications, e.g., influence maximization~\citep{kempe2003maximizing,qian2017constrained}, data summarization~\citep{lin2011class, sipos2012temporal, dasgupta2013summarization}, sparse regression~\citep{DBLP:conf/icml/DasK11, qian2015subset}, sensor placement~\citep{krause2012near}, adversarial attack~\cite{liu2021efficient}, and human assisted learning~\cite{de2020regression,liu2023hal}. This problem is generally NP-hard, and the design of polynomial-time approximation algorithms has attracted much attention.

The pioneering work of~\citet{nemhauser1978analysis} and~\citet{fisher1978analysis} revealed the good performance of the classic greedy algorithm which achieves an approximation ratio of $(1-1 / e)^{-1}$ under the cardinality constraint and a $(k+1)$-approximation under the more general $k$-matroid constraint when maximizing a monotone submodular function. When the objective function is non-monotone, \citet{lee2010submodular} provided a $(k+1+1 / (k-1)+\epsilon)$-approximation for the $k$-matroid by local search requiring $\text{poly}(n) \cdot \exp(k, \epsilon)$ running time, where $n$ is the problem size and $\epsilon >0$. These studies mainly focus on submodular maximization under one kind of constraint, while real-world problems often deal with multiple constraints simultaneously, e.g., movie recommendation under cardinality and rating constraints~\cite{pmlr-v48-mirzasoleiman16}, and Twitter text summarization under cardinality and date constraints~\citep{badanidiyuru2020submodular}. 

For submodular maximization under matroid and knapsack constraints,~\citet{chekuri2010dependent} proposed an algorithm which achieves a $(k / 0.19+\epsilon)$-approximation for $k$-matroid and $m$-knapsack constraints with the time complexity of $\text{poly}(n) \cdot \exp(k, m, \epsilon)$, which, however, may be unacceptable.~\citet{pmlr-v48-mirzasoleiman16} developed the first practical algorithm, FANTOM, offering a $((1+\epsilon)(1+1 / k)(2k+2m+1))$-approximation for the intersection of the very general $k$-system and $m$-knapsack constraints with $\tilde{O}(n^2 / \epsilon)$ oracle calls. For the sake of simplicity, we ignore poly-logarithmic factors by using the $\tilde{O}$ notation. More recently,~\citet{feldman2020you} designed D{\footnotesize ENSITY}S{\footnotesize EARCH}SGS based on the simultaneous greedy algorithmic framework to solve this problem, which achieves an approximation ratio of $(1+\epsilon)(k+2m)+O(\sqrt{k+m})$ with $\tilde{O}(n / \epsilon)$ oracle calls. The approximation ratio becomes $(1+\epsilon)(k+2m)+O(\sqrt{m})$ for the intersection of $k$-extendible and $m$-knapsack constraints and $(1+\epsilon)(k+2m+1)$ for a monotone objective function, where $k$-extendible is a subclass of $k$-system. 

\begin{table*}[htbp]
\centering

\begin{tabular}{c|c|c}

Algorithm &
Approximation &
Running Time \\

\hline

FANTOM~\cite{pmlr-v48-mirzasoleiman16}
 &
$(1 + \epsilon)(2k + (2 + 2/k)m)
+ O(1)$ &
$\tilde{O}(n^2 / \epsilon)$ \\

D{\footnotesize ENSITY}S{\footnotesize EARCH}SGS~\cite{feldman2020you} &

$(1 + \epsilon)(k + 2m)
+ O(\sqrt{m})$ &
$\tilde{O}(n / \epsilon)$ \\

SPROUT (this paper) &
$(1 + \epsilon)(k + m)
+ O(\sqrt{m})$ &
$\tilde{O}(n^2 / \epsilon)$ \\

\end{tabular}
\caption{Comparison of the state-of-the-art algorithms for submodular maximization under the intersection of $k$-matroid and $m$-knapsack constraints. For the proposed SPROUT algorithm, the parameter $C=1$ is used here.} 
\label{table1}
\end{table*}

Note that the constraints (i.e., $k$-system) considered in previous studies may be so general that the proposed algorithms may not perform well under some important subclasses of these constraints. In this paper, we consider the (not necessarily monotone) submodular maximization problem under the intersection of $k$-matroid and $m$-knapsack constraints, which arises in numerous applications, e.g.,
vertex cover~\citep{DBLP:conf/nips/DelongVOB12}, weighted max-cut~\citep{DBLP:conf/colt/FeldmanHK17, DBLP:conf/icml/Haba0FK20}, video summarization~\citep{DBLP:conf/cvpr/GygliGG15, DBLP:conf/nips/FeldmanK018}, image summarization and revenue maximization~\citep{pmlr-v48-mirzasoleiman16}.
We propose a Simultaneous and Partial enumeRation cOnstrained sUbmodular maximizaTion algorithm, called  SPROUT, by incorporating the partial enumeration technique into the simultaneous greedy algorithmic framework. SPROUT offers an opportunity of balancing the approximation guarantee with the running time by a parameter $C$, i.e., as $C$ increases, the approximation ratio improves while the time complexity raises.
In particular, when $C = 1$, SPROUT achieves a $((1+\epsilon)(k+m+3+2\sqrt{m+1}))$-approximation using $\tilde{O}(n^{2} / \epsilon)$ oracle calls,\footnote{As in~\cite{feldman2020you}, the dependence on $k$ and $m$ is suppressed from the running time.} which is better than the state-of-the-art algorithms, as shown in Table~\ref{table1}. When the objective function is monotone, the approximation ratio improves to $(1+\epsilon)(k+m+1)$. In this case, our algorithm is incomparable with Algorithm 5 in~\cite{DBLP:conf/nips/LiF0K22}, which achieves a $((1+O(\epsilon))(k + 7m/4 + 1))$-approximation for $k$-system and $m$-knapsack constraints in essentially linear time. Besides, the BARRIER-GREEDY++ algorithm in~\cite{badanidiyuru2020submodular} achieves a $(k+1+\epsilon)$-approximation for $k$-matchoid and $m$-knapsack constraints (where $m \leq k$) in $\tilde{O}(n^3)$ time, where the $k$-matchoid constraint is a generalization of the $k$-matroid constraint. 

Since the partial enumeration used in SPROUT may be too time-consuming and some good solutions in binary search are ignored, we propose a more practical and efficient algorithm, i.e., SPROUT++, by introducing the random enumeration and smooth techniques into SPROUT. We prove that SPROUT++ can achieve a similar approximation guarantee to SPROUT. Experiments are conducted on the applications of movie recommendation and weighted max-cut, demonstrating the superior performance of SPROUT++ in practice.

\section{Preliminaries}

Given a ground set $\mathcal{N}$, we study the submodular functions $f : 2^{\mathcal{N}} \rightarrow \mathbb{R}$, i.e., set functions satisfying the diminishing returns property. Specifically, a set function is submodular if $f(e|A) \geq f(e|B), \ \forall A \subseteq B \subseteq \mathcal{N} \text{ and } e \notin B$, where $f(S|A) \triangleq f(S\cup A) - f(A)$ means the marginal gain of adding a set $S$ to $A$. Note that we do not distinguish the element $e$ and the corresponding single-element set $\{e\}$ for convenience. Without loss of generality, we assume the functions are non-negative. 

Now we introduce the considered constraints. Given a set system $(\mathcal{N},\mathcal{I})$ where $\mathcal{I} \subseteq 2^\mathcal{N}$, $(\mathcal{N},\mathcal{I})$ is called an independence system if (1) $\emptyset \in \mathcal{I}$; (2) $\forall A \subseteq B \subseteq \mathcal{N}$, if $B \in \mathcal{I}$ then $A \in \mathcal{I}$.
An independence system is called a matroid $\mathcal{M}(\mathcal{N},\mathcal{I})$ if $\forall A, B \in \mathcal{I}$ and $|A| < |B|$, there is $e \in B \backslash A$ such that $A \cup e \in \mathcal{I}$. We present $k$-matroid in Definition~\ref{def:matroid}.

\begin{definition}[$k$-Matroid] \label{def:matroid}
Given $k$ matroids $\mathcal{M}_i ( \mathcal{N} , \mathcal{I}_i )$ defined on the ground set $\mathcal{N}$, $k$-matroid is a matroid $\mathcal{M}(\mathcal{N},\mathcal{I})$, where $\mathcal{I} = \bigcap_{i=1}^{k}{\mathcal{I}_i }$.
\end{definition}
Besides, given a modular cost function $c:2^\mathcal{N} \rightarrow \mathbb{R}$, i.e., $\forall A \subseteq \mathcal{N}, c(A)=\sum_{e \in A}c(e)$, and a budget $B$, the knapsack constraint means that the cost of a subset should be upper bounded by $B$, i.e., $c(S) \le B$. Without loss of generality, the budget $B$ is normalized to 1. Next, we present $m$-knapsack constraint in Definition~\ref{def:knapsack}. We use $[m]$ (where $m$ is a positive integer) to denote the set $\{1,2,\ldots,m\}$. 
\begin{definition}[$m$-Knapsack Constraint] \label{def:knapsack}
Given $m$ modular cost functions $c_1, c_2, \dots , c_m$, a set $S\subseteq \mathcal{N}$ satisfies the $m$-knapsack constraint if and only if $\forall i\in [m], c_i(S)\le 1$.
\end{definition}

Instead of meeting one of these constraints, real-world applications often involve them simultaneously. In this paper, we study the problem of submodular maximization under the intersection of $k$-matroid and $m$-knapsack constraints.

\begin{definition}[Submodular Maximization Under the Intersection of $k$-Matroid and $m$-Knapsack Constraints] \label{def:problem}
Given a submodular function $f : 2^{\mathcal{N}} \rightarrow \mathbb{R}$, a $m$-knapsack constraint with cost functions $c_1, c_2,\cdots, c_m$, and a $k$-matroid $\mathcal{M}(\mathcal{N},\bigcap_{i=1}^{k}{\mathcal{I}_i})$, to find $\mathop{\arg\max}_{S \subseteq \mathcal{N}} f(S)$ such that $S \in \bigcap_{i=1}^{k}{\mathcal{I}_i}$ and $\forall i \in [m], c_{i}(S) \leq 1$.
\end{definition}

\citet{pmlr-v48-mirzasoleiman16} proposed FANTOM which can find a good solution by employing density threshold greedy multiple times. If the solutions generated-so-far are not good, there must exist high-quality solutions in the elements that never being chosen before, and density threshold greedy will continue to search.~\citet{feldman2020you} provided D{\footnotesize ENSITY}S{\footnotesize EARCH}SGS achieving a better approximation guarantee with lower time complexity by utilizing the simultaneous greedy framework. However, it seems that D{\footnotesize ENSITY}S{\footnotesize EARCH}SGS may be so delicate that it achieves previously the best approximation guarantee while behaving not that well in practice, as mentioned in~\cite{feldman2020you}. Aimed to get around it, we propose the algorithms, SPROUT and SPROUT++.

\section{SPROUT Algorithm}

\begin{algorithm}[htb]
    \begin{algorithmic}[1]
    \REQUIRE{Objective function $f:2^{\mathcal{N}}\rightarrow \mathbb{R}_{+}$, $k$ matroids $\mathcal{M}_i(\mathcal{N},\mathcal{I}_i)$ and $m$ cost functions $c_i:\mathcal{N} \rightarrow \mathbb{R}_{+}$}
        
    \algnewcommand\algorithmicparam{\textbf{Parameter:}}
    \algnewcommand\Param{\item[\algorithmicparam]}
    \Param{Error params $\delta, \epsilon$, correction params $\beta, \gamma$, enumeration param $C$ and number $\ell$ of solutions}
    
    \ENSURE{A set $S$ s.t. $S \in \bigcap_{i=1}^{k} \mathcal{I}_i$ and $\forall i\in [m], c_i(S) \leq 1 $}
    
    \FOR{each feasible $\mathcal{A} \subseteq \mathcal{N}$ with $C$ elements}
    \STATE $z_{\mathcal{A}}(S) \triangleq f(S|\mathcal{A})$.

    \STATE $\mathcal{N}' \triangleq \{e \in \mathcal{N} | e \notin \mathcal{A} \wedge C \cdot z_{\mathcal{A}}(e) \leq f(\mathcal{A}) \}$.

    \STATE $\mathcal{M}'_i(\mathcal{N}',\mathcal{I}'_i) \triangleq $ contraction of $\mathcal{M}_i(\mathcal{N},\mathcal{I}_i)$ by $\mathcal{A}$.
    
    \STATE $\mathcal{I}' \triangleq \bigcap_{i=1}^{k}\mathcal{I}'_i$.
    
    \STATE Decrease knapsack budgets by $c_i(\mathcal{A})$ and normalize \par \hskip -1em each of them to 1.

    \STATE $\text{Let } S_0 = \emptyset$, and $\mathcal{V}$ be the maximum $z_{\mathcal{A}}$ value of\par \hskip -1em a single feasible element in $\mathcal{N}'$.
    
    \STATE $\text{Let } b_{1}=1 \text{ and } b_{0}=\left \lceil \log{|\mathcal{N}'|} / \delta \right \rceil$.
    
    \WHILE{$| b_{1} - b_{0} | > 1$}
    
    \STATE $\rho = \beta \mathcal{V} (1+\delta)^{\left \lfloor (b_{1} + b_{0} + 1) / 2 \right \rfloor} + \gamma f(\mathcal{A}) / C$.
    
    \STATE $S_K\! = \!\text{K}\text{\footnotesize{NAPSACK}}\text{SGS}(z_{\mathcal{A}},\!\mathcal{N}',\!\mathcal{I}',\!\{c_i\}_{i=1}^m,\!\ell,\!\rho,\!\epsilon)$.

    \STATE Add $S_K$ to $S_0$.
    
    \STATE $b_{E} = \left \lfloor (b_{1} + b_{0} + 1) / 2 \right \rfloor$.
    
    \ENDWHILE
    
    \STATE $S_{\mathcal{A}} = \mathop{\arg\max}_{S \in S_0} f(S)$.
     
    \ENDFOR

    \STATE $\mathcal{A}^* = \mathop{\arg\max}_{\mathcal{A}} f(\mathcal{A} \cup S_{\mathcal{A}})$ over all feasible $\mathcal{A} \subseteq \mathcal{N}$.
    
    \RETURN{$\mathcal{A}^* \cup S_{\mathcal{A}^*}$}
    
    \caption{SPROUT}
    \label{algo:SPROUT}
\end{algorithmic}
\end{algorithm}
The basic idea of SPROUT is incorporating a partial enumeration technique~\cite{badanidiyuru2020submodular} into the simultaneous greedy framework~\cite{feldman2020you}, such that it can be more robust in practice and provide a better approximation ratio.
Specifically, SPROUT as presented in Algorithm~\ref{algo:SPROUT} enumerates the feasible set $\mathcal{A}$ with a size of $C$ as the first part of solution set, and then selects the set $S_{\mathcal{A}}$ from the remaining elements using the K{\footnotesize NAPSACK}SGS subroutine as presented in Algorithm~\ref{algo:KNAPSACKSGS}. Finally, it returns the best solution $\mathcal{A} \cup S_{\mathcal{A}}$ over all feasible $\mathcal{A}$. 
\begin{algorithm}[htb]
    \begin{algorithmic}[1]
    \REQUIRE{Objective function $z:2^{\mathcal{N}'}\rightarrow \mathbb{R}_{+}$, $k$ matroids $\mathcal{M}'_i(\mathcal{N}',\mathcal{I}'_i)$ and $m$ cost functions $c'_i:\mathcal{N}' \rightarrow \mathbb{R}_{+}$}
    
    \algnewcommand\algorithmicparam{\textbf{Parameter:}}
    \algnewcommand\Param{\item[\algorithmicparam]}
    \Param{Error param $\epsilon$, number $\ell$ of solutions, density ratio $\rho$}
    
    \ENSURE{A set $S$ s.t. $S \in \bigcap_{i=1}^{k} \mathcal{I}'_i$ and $\forall  i \in [m], c'_i(S) \leq 1 $}

    \STATE Let $S_{0}^{i} = \emptyset$ for all $i \in [\ell]$ and $\mathcal{N}'_0 = \mathcal{N}'$.
    
    \STATE Let $t = 1$ and $\tau = \mathcal{V}$.
    
    \WHILE{$\tau > \epsilon \mathcal{V} / n$}
    \FOR{each pair $(a,i)$ with $a \in \mathcal{N}'_{t-1}, i \in [\ell]$}
    
    \IF{$S_{t-1}^{i} \cup a \in \mathcal{I}'$ and $z(a|S_{t-1}^{i}) \geq \max \{\tau, \rho $ \par
        \hskip\algorithmicindent $\cdot \sum_{j=1}^{m}c_{j}'(a)\}$}
    
    \IF{$c_{j}'(a) \leq 1, \forall j \in [m]$}
    
    \STATE Let $a_{t} = a$ and $i_{t} = i$.
    
    \STATE For $i' \in [\ell] $, 
    $
    S_{t}^{i'} = 
    \left\{
    \begin{aligned}
    S_{t-1}^{i_{t}} \cup a_{t}, \text{if } i' = i_{t}\\
    S_{t-1}^{i}, \text{if } i' \neq i_{t}
    \end{aligned}
    \right.
    $.
    
    \STATE $\mathcal{N}'_{t} = \mathcal{N}'_{t-1} \backslash a_{t}$.
    
    \STATE $t = t + 1$.
    
    \ENDIF
    \ENDIF
    \ENDFOR
    
    \STATE $\tau = (1 - \epsilon) \tau$.
    
    \ENDWHILE
    

    \RETURN{$\mathop{\arg\max}_{S \in \{S_{t-1}^{i}\}_{i=1}^{\ell}} z(S)$}
    
    \caption{K{\footnotesize NAPSACK}SGS: Subroutine of SPROUT \\~\citep{feldman2020you}}
    \label{algo:KNAPSACKSGS}
    \end{algorithmic}
\end{algorithm}
In the following, we describe in more details about SPROUT. 
First, we define a new objective function $z_{\mathcal{A}}$ in line~2 for the enumerated feasible set $\mathcal{A} \subseteq \mathcal{N}$ in line~1. Then, a reduced ground set $\mathcal{N}'$ is derived by removing elements belonging to $\mathcal{A}$ or whose value of $z_{\mathcal{A}}$ is larger than $f(\mathcal{A}) / C$ from $\mathcal{N}$ in line~3. After selecting the set $\mathcal{A}$, we need to contract matroids from $\mathcal{M}_i(\mathcal{N},\mathcal{I}_i)$ to $\mathcal{M}'_i(\mathcal{N}',\mathcal{I}'_i)$  by $\mathcal{A}$ in line~4, where $\mathcal{I}'_i$ is the set $\{S \subseteq \mathcal{N}': S \cup \mathcal{A} \in \mathcal{I}_i \}$ based on the concept of matroid contraction~\cite{white1986theory}. For knapsack constraints, the budget for each cost function is decreased by the corresponding cost of $\mathcal{A}$ in line~6. 

We start with the subroutine, i.e., K{\footnotesize NAPSACK}SGS~\cite{feldman2020you} as presented in Algorithm~\ref{algo:KNAPSACKSGS}, which utilizes the density threshold in its simultaneous greedy framework. At a high level, K{\footnotesize NAPSACK}SGS simultaneously maintains $\ell$ disjoint candidate solution sets and inserts an element to one of them at a time by a selection criterion utilizing the density threshold. 
More concretely, it adds an element to one candidate solution set $S$ when its density, i.e., the marginal gain divided by the sum of the knapsack costs, is not smaller than the density ratio $\rho$ in line~5.
Moreover, K{\footnotesize NAPSACK}SGS decreases threshold $\tau$ in line~14 to limit the number of iterations. 

Since the density ratio is key to SPROUT, it is essential to derive an appropriate value. Inspired by~\cite{feldman2020you}, SPROUT uses binary search in lines~8--14 to approximate the best density ratio $\rho^*$, where $\beta$ is used to ensure that $\rho^*$ is included in the range of search. In this procedure, each $S_K$ generated by K{\footnotesize NAPSACK}SGS is added to $S_0$, and $E$ in $b_E$ indicates whether the knapsack constraints are violated in line~6 of K{\footnotesize NAPSACK}SGS during the execution, i.e., $E = 0$ if the subroutine never violates them and $E = 1$ otherwise. 
For each $\mathcal{A}$, $S_\mathcal{A} $ with the maximal $f$ value is selected from $S_0$ in line~15. Finally, the union of $\mathcal{A}$ and its corresponding $S_\mathcal{A}$ maximizing $f$ is returned.

\subsection{Theoretical Analysis}

We prove in Theorem~\ref{theo:1} that SPROUT can achieve an approximation ratio of $(1+\epsilon)(k+m+3+2\sqrt{m+1})$ using $\tilde{O}(n^{2} / \epsilon)$ oracle calls, and this ratio improves to $(1+\epsilon)(k+m+1)$ when the objective function is monotone. 

The time complexity is mainly measured by the number of oracle calls and arithmetic operations since we evaluate the objective function and the constraints through a value and a membership oracle, respectively. Because the computational cost of arithmetic operations is much less than that of oracle calls in most applications, we focus more on the number of oracle calls. For the ease of presentation, we use $S_\text{OPT}$ and $\text{OPT}$ to denote an optimal solution and its objective value of the problem in Deﬁnition~\ref{def:problem}, respectively. Besides, the solution $S_{\mathcal{A}}\cup \mathcal{A}$ obtained by SPROUT is represented by $\mathcal{S}$. 

The proof of Theorem~\ref{theo:1} relies on Lemma~\ref{lem:1}, which shows the approximation of solutions obtained in the binary search procedure of SPROUT. Our proof is mainly attributed to the analysis in~\cite{feldman2020you}.

\begin{lemma} \label{lem:1}
In SPROUT, $f(\mathcal{A} \cup S_K) \geq \min \{\rho + (1- 1 / C)f(\mathcal{A}), \frac{1-\epsilon}{p+1}( (1- 1 / \ell -\epsilon )z_{\mathcal{A}}(S_{\text{OPT}}') - \rho m ) + f(\mathcal{A}) \}$ for each generated $\rho$ in line~10 and corresponding $S_K$, where $S_{\text{OPT}}'$ refers to an optimal solution for the reduced instance, and $p= \max\{\ell-1, k\}$.
\end{lemma}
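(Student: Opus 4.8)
Since $z_{\mathcal{A}}(S)=f(S\cup\mathcal{A})-f(\mathcal{A})$ we have $f(\mathcal{A}\cup S_K)=z_{\mathcal{A}}(S_K)+f(\mathcal{A})$, so the plan is to show $z_{\mathcal{A}}(S_K)\ge\min\{\rho-f(\mathcal{A})/C,\ \tfrac{1-\epsilon}{p+1}((1-1/\ell-\epsilon)z_{\mathcal{A}}(S_{\text{OPT}}')-\rho m)\}$. Two facts are used throughout: $z_{\mathcal{A}}$ is submodular (a marginal of the submodular $f$), and every element inserted by K{\footnotesize NAPSACK}SGS (Algorithm~\ref{algo:KNAPSACKSGS}) has marginal gain at least $\tau>0$, so each candidate set $S^i$, and hence $S_K=\arg\max_i z_{\mathcal{A}}(S^i)$, satisfies $z_{\mathcal{A}}(S_K)\ge 0$. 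I would then split on the flag $E$ that records whether line~6 of K{\footnotesize NAPSACK}SGS ever fails, i.e., whether a knapsack constraint is violated during the run.

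Case $E=1$ (the first branch of the minimum). Some element $a$ passed the outer test ($S^i_{t-1}\cup a\in\mathcal{I}'$ and $z_{\mathcal{A}}(a\mid S^i_{t-1})\ge\max\{\tau,\rho\sum_j c_j'(a)\}$) yet has $c_j'(a)>1$ for some $j$. Then $\sum_j c_j'(a)\ge c_j'(a)>1$, so $z_{\mathcal{A}}(a\mid S^i_{t-1})\ge\rho\sum_j c_j'(a)>\rho$ (the case $\rho\le 0$ being trivial since $f(\mathcal{A})/C\ge 0\ge\rho$). On the other hand $a\in\mathcal{N}'$, so by the definition of $\mathcal{N}'$ in line~3 of Algorithm~\ref{algo:SPROUT}, $Cz_{\mathcal{A}}(a)\le f(\mathcal{A})$; submodularity gives $z_{\mathcal{A}}(a\mid S^i_{t-1})\le z_{\mathcal{A}}(a)\le f(\mathcal{A})/C$. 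Hence $\rho<f(\mathcal{A})/C$, so $\rho+(1-1/C)f(\mathcal{A})<f(\mathcal{A})\le f(\mathcal{A})+z_{\mathcal{A}}(S_K)=f(\mathcal{A}\cup S_K)$, which is at least the minimum in the statement.

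Case $E=0$ (the second branch). Here K{\footnotesize NAPSACK}SGS runs to its natural termination at a last threshold $\tau_{\mathrm{fin}}\le\epsilon\mathcal{V}/(n(1-\epsilon))$, and I would carry out the simultaneous-greedy-with-density analysis of~\citet{feldman2020you} on the reduced instance $(z_{\mathcal{A}},\mathcal{N}',\mathcal{I}',\{c_i'\})$, whose optimal solution is $O:=S_{\text{OPT}}'$. The skeleton: with $S^1,\dots,S^\ell$ the final candidate sets, each $o\in O\setminus\bigcup_i S^i$ is, at termination, rejected from every solution $S^i$ either because it is spanned by $S^i$ in one of the $k$ matroids or because its marginal gain has dropped below $\max\{\tau_{\mathrm{fin}},\rho\sum_j c_j'(o)\}$ (here $c_j'(o)\le c_j'(O)\le 1$, so line~6 never blocks $o$), while the $o\in O\cap\bigcup_i S^i$ contribute their insertion gains to $\sum_i z_{\mathcal{A}}(S^i)$; the spanned elements are charged, via the exchange property of $\mathcal{I}'=\bigcap_r\mathcal{I}_r'$, to greedy elements with multiplicity at most $k$, each charge costing only a $(1-\epsilon)$ factor because of the geometric threshold schedule. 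Aggregating: the density terms over $O$ sum to $\rho\sum_j c_j'(O)\le\rho m$ (each of the $m$ reduced budgets is $1$); the residual $\tau_{\mathrm{fin}}$ terms sum to at most $\tfrac{\epsilon}{1-\epsilon}\mathcal{V}\le\tfrac{\epsilon}{1-\epsilon}z_{\mathcal{A}}(O)$ (since $\mathcal{V}\le z_{\mathcal{A}}(O)$, as $O$ dominates every feasible singleton); and, for non-monotone $f$, pairwise-disjointness of the $S^i$ with submodularity and $z_{\mathcal{A}}\ge 0$ gives $\sum_i z_{\mathcal{A}}(O\cup S^i)\ge(\ell-1)z_{\mathcal{A}}(O)$, so the best solution satisfies $z_{\mathcal{A}}(O\cup S^{i^\star})\ge(1-1/\ell)z_{\mathcal{A}}(O)$ — in the monotone case one instead uses $z_{\mathcal{A}}(O\cup S^i)\ge z_{\mathcal{A}}(O)$, which is why the $1/\ell$ loss is absent from the monotone bound of Theorem~\ref{theo:1}. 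Feeding these bounds into the accounting, with the denominator estimate $1+\tfrac{k}{1-\epsilon}\le\tfrac{p+1}{1-\epsilon}$ for $p=\max\{\ell-1,k\}$ and $z_{\mathcal{A}}(S_K)=\max_i z_{\mathcal{A}}(S^i)$, yields $z_{\mathcal{A}}(S_K)\ge\tfrac{1-\epsilon}{p+1}\big((1-1/\ell-\epsilon)z_{\mathcal{A}}(O)-\rho m\big)$, and adding $f(\mathcal{A})$ gives the second branch.

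The main obstacle is the accounting in the $E=0$ case: coordinating the $k$-matroid-intersection exchange argument with the density/threshold bookkeeping and, for non-monotone $f$, extracting exactly a $1/\ell$ disjointness loss while keeping the denominator at $\max\{\ell-1,k\}+1$ rather than $\ell+k$. This part is genuinely inherited from~\citet{feldman2020you}; the work here is in threading the new additive offsets — the $(1-1/C)f(\mathcal{A})$ shift and the $\rho m$ aggregate of per-knapsack density losses — through each step of their argument without degrading constants. Two smaller points must be handled explicitly: that the filter $\mathcal{N}'=\{e:Cz_{\mathcal{A}}(e)\le f(\mathcal{A})\}$ is exactly what forces $\rho<f(\mathcal{A})/C$ when a knapsack is violated (this is the only place the enumeration parameter $C$ enters, and it is what makes the first branch attainable), and the degenerate case $\mathcal{V}\le 0$, in which K{\footnotesize NAPSACK}SGS returns $\emptyset$ and both branches are controlled directly by the chosen $\beta,\gamma$.
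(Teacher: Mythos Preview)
Your Case $E=0$ is essentially the paper's own argument: both defer to the simultaneous-greedy analysis of \citet{feldman2020you} (the paper invokes their Corollary~54 together with Lemma~2.2 of \citet{buchbinder2014submodular} for the $(1-1/\ell)$ disjointness loss) to obtain $z_{\mathcal{A}}(S_K)\ge\tfrac{1-\epsilon}{p+1}\big((1-1/\ell-\epsilon)z_{\mathcal{A}}(S_{\text{OPT}}')-\rho m\big)$, and then add $f(\mathcal{A})$. One small imprecision: what you need is the \emph{average} $\tfrac{1}{\ell}\sum_i z_{\mathcal{A}}(O\cup S^i)\ge(1-1/\ell)z_{\mathcal{A}}(O)$, not a bound on ``the best'' $z_{\mathcal{A}}(O\cup S^{i^\star})$, since Corollary~54 relates $\sum_i z_{\mathcal{A}}(S^i)$ to $\sum_i z_{\mathcal{A}}(O\cup S^i)$.

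Your Case $E=1$, however, has a real gap. You read line~6 of Algorithm~\ref{algo:KNAPSACKSGS} literally as a single-element test $c_j'(a)\le 1$, conclude that its failure means $c_j'(a)>1$, and from $z_{\mathcal{A}}(a\mid S^i_{t-1})\ge\rho\sum_j c_j'(a)>\rho$ together with $z_{\mathcal{A}}(a)\le f(\mathcal{A})/C$ deduce $\rho<f(\mathcal{A})/C$, making the first branch trivial. But the paper's own proof makes clear that the intended check (inherited from \citet{feldman2020you}) is on the \emph{accumulated} cost: $E=1$ means there exist $i,t,a$ with $S^i_{t-1}\cup a\in\mathcal{I}'$ yet $c_j'(S^i_{t-1}\cup a)>1$ for some $j$, while each individual element---including $a$---may still satisfy $c_j'(\cdot)\le 1$. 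Under this reading your step $\sum_j c_j'(a)>1$ fails, and $\rho<f(\mathcal{A})/C$ need not hold at all. (Note also that if line~6 really tested only single elements, the returned $S_K$ would not be guaranteed knapsack-feasible.)

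The repair is exactly what the paper does: let $A=S^i_{t-1}\cup a$ and telescope over its elements in insertion order. Each was accepted with density at least $\rho$, so $z_{\mathcal{A}}(A)=\sum_r z_{\mathcal{A}}(a_r\mid A_{r-1})\ge\rho\sum_j c_j'(A)>\rho$, the strict inequality because $A$ violates some knapsack. Then $z_{\mathcal{A}}(S_K)\ge z_{\mathcal{A}}(S^i_{t-1})=z_{\mathcal{A}}(A\setminus a)\ge z_{\mathcal{A}}(A)-z_{\mathcal{A}}(a)$ by submodularity, and \emph{only now} does the filter $a\in\mathcal{N}'$ enter, giving $z_{\mathcal{A}}(a)\le f(\mathcal{A})/C$ and hence $z_{\mathcal{A}}(S_K)\ge\rho-f(\mathcal{A})/C$. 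So your instinct that the $\mathcal{N}'$-filter is what makes the first branch attainable is correct, but it acts by bounding the last marginal $z_{\mathcal{A}}(a)$ after the telescoping, not by forcing $\rho$ itself to be small.
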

\begin{proof}
The following analysis includes two parts based on the value of the indicator $E$. When $E=1$, there is a candidate solution set $S_{t}^{i}$ and an element $a$ satisfying that $S_{t}^{i}\cup a$ obeys the $k$-matroid constraint and violates the $m$-knapsack constraint, i.e., $S_{t}^{i} \cup a \in \mathcal{I}'$ and $c'_{j}(S_{t}^{i} \cup a) > 1$ for some $j \in [m]$. We use $A$ to represent $S_{t}^{i} \cup a$ and show that $z_{\mathcal{A}}(A) > \rho$, despite the infeasibility of $A$. Considering Corollary~54 in~\cite{feldman2020you}, we permutate the elements of $A$ in the order of being added to $S_{t}^{i}$, which means $A = \{a_i\}_{i=1}^{k}$ and $a_k$ is $a$. Let $A_{i} = \{a_j\}_{j=1}^{i}$ for $i \in [k]$ and $A_0 = \emptyset$. As a result, we derive that $z_{\mathcal{A}}(A) = \sum_{i=1}^{k}z_{\mathcal{A}}(a_{i}|A_{i-1}) \geq \sum_{i=1}^{k}\rho\sum_{j=1}^{m}c'_{j}(a_i)=\rho\sum_{j=1}^{m}c'_{j}(A) > \rho$, since all of the elements in $A$ are added with a density ratio no less than $\rho$, and $A$ violates the $m$-knapsack constraint. By line~16 of Algorithm~\ref{algo:KNAPSACKSGS}, we know that $z_{\mathcal{A}}(S_K)=\max_{S \in \{S_{T}^{i}\}_{i=1}^{\ell}} z_{\mathcal{A}}(S)$, where $T$ is the number of iterations executed in K{\footnotesize NAPSACK}SGS. By line~3 of Algorithm~\ref{algo:SPROUT}, we have $ C \cdot z_{\mathcal{A}}(a) \leq f(\mathcal{A})$. Thus, we obtain $f(\mathcal{A} \cup S_K) = f(\mathcal{A}) + z_{\mathcal{A}}(S_K) \geq f(\mathcal{A}) + z_{\mathcal{A}}(S_{t}^{i}) = f(\mathcal{A}) + z_{\mathcal{A}}(A \setminus a) \geq f(\mathcal{A}) + z_{\mathcal{A}}(A) - z_{\mathcal{A}}(a) \geq \rho + (1- 1 / C)f(\mathcal{A})$, where the second inequality is by the submodularity of $z_{\mathcal{A}}$.

When $E = 0$, we apply Corollary~54 in~\cite{feldman2020you} to K{\footnotesize NAPSACK}SGS, which yields $
z_{\mathcal{A}}(S_K) = \max_{S \in \{S_{T}^{i}\}_{i=1}^{\ell}} z_{\mathcal{A}}(S) \geq \sum_{i=1}^{\ell}z_{\mathcal{A}}(S_{T}^{i}) / \ell \geq \frac{1-\epsilon}{p+1} \left( \sum_{i=1}^{\ell}z_{\mathcal{A}}(S_{\text{OPT}}'\cup S_{T}^{i}) / \ell - \epsilon \mathcal{V} - \rho m  \right)$, where the set $S_{\text{OPT}}'$ refers to an optimal solution for the reduced instance, and $p = \max\{\ell-1, k\}$ as stated by Proposition 8 in~\citep{feldman2020you}. Besides, if $z_{\mathcal{A}}$ is monotone, this conclusion can improve to 
\begin{equation} \label{eq:mono}
z_{\mathcal{A}}(S_K) \geq (1-\epsilon) \left( z_{\mathcal{A}}(S_{\text{OPT}}') - \epsilon \mathcal{V} - \rho m  \right)/(p+1).
\end{equation}
Let $S_u$ be a set selected from the disjoint sets $\{S_{T}^{i}\}_{i=1}^{\ell}$ uniformly at random. Obviously, each element in $\mathcal{N}'$ appears in $S_u$ with probability at most $1 / \ell$. We have $\sum_{i=1}^{\ell}z_{\mathcal{A}}(S_{\text{OPT}}'\!\cup \!S_{T}^{i}) / \ell \!=\! \mathbb{E}[z_{\mathcal{A}}(S_{\text{OPT}}'\!\cup\! S_u)] \!\geq\! (1\!-\! 1 / \ell)z_{\mathcal{A}}(S_{\text{OPT}}')$, where the inequality holds by Lemma~2.2 in~\cite{buchbinder2014submodular}. Thus, $f(\mathcal{A} \!\cup\! S_K) \!= \!  f(\mathcal{A}) \!+\! z_{\mathcal{A}}(S_K) \!\geq\! \frac{1-\epsilon}{p+1} \left( (1- 1 / \ell -\epsilon )z_{\mathcal{A}}(S_{\text{OPT}}') - \rho m \right) + f(\mathcal{A})$. 
\end{proof}

\begin{theorem} \label{theo:1}
For the problem in Definition~\ref{def:problem}, when the error parameters $\delta=\epsilon$, and the number $\ell$ of solutions is $P+1$, SPROUT achieves an approximation ratio of roughly\footnote{The precise approximation ratio is shown in Eq.~\eqref{eq:summary}.} $\left(\frac{1-\epsilon}{k+m+3+2\sqrt{m+1}}+\frac{(1-\epsilon)C}{r}\right)^{-1}$ using $\tilde{O}(Pn^{C+1} / \epsilon)$ oracle calls and $\tilde{O}(Pmn^{C+1} / \epsilon)$ arithmetic operations, where $P = \max\{\lceil \sqrt{1+m} \rceil, k\}$ and $r$ is the size of $S_{\text{OPT}}$. 

When $C = 1$, the approximation ratio becomes $(1+\epsilon)(k+m+3+2\sqrt{m+1})$, and the number of oracle calls becomes $\tilde{O}(Pn^2 / \epsilon)$. If the objective function $f$ is monotone, the approximation ratio improves to $(1+\epsilon)(k+m+1)$.
\end{theorem}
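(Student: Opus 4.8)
The plan is to instantiate Lemma~\ref{lem:1} at a carefully chosen enumerated set $\mathcal{A}$ and a carefully chosen density ratio $\rho$, and then optimise the parameters. For the enumerated set, order $S_{\text{OPT}}=\{o_1,\dots,o_r\}$ so that $o_i\in\arg\max_{o}f(o\mid\{o_1,\dots,o_{i-1}\})$ and take $\mathcal{A}=\{o_1,\dots,o_C\}$ (if $r<C$, then $S_{\text{OPT}}$ itself is one of the enumerated sets and the bound is immediate). This $\mathcal{A}$ is feasible, hence examined in some iteration of the outer loop, and since Algorithm~\ref{algo:SPROUT} returns the best $\mathcal{A}\cup S_{\mathcal{A}}$ over all enumerated $\mathcal{A}$, it suffices to lower bound $f(\mathcal{A}\cup S_{\mathcal{A}})$ for this particular $\mathcal{A}$. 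Two consequences of submodularity and the greedy order will be used: the marginals $f(o_i\mid\{o_1,\dots,o_{i-1}\})$ are non-increasing, so $f(\mathcal{A})\ge (C/r)\,\text{OPT}$; and every $o_j\notin\mathcal{A}$ satisfies $C\,z_{\mathcal{A}}(o_j)\le f(\mathcal{A})$, so $S_{\text{OPT}}\setminus\mathcal{A}\subseteq\mathcal{N}'$ (line~3), while $S_{\text{OPT}}\setminus\mathcal{A}$ is clearly feasible for the contracted matroids and the reduced knapsacks; hence $z_{\mathcal{A}}(S_{\text{OPT}}')\ge z_{\mathcal{A}}(S_{\text{OPT}}\setminus\mathcal{A})=\text{OPT}-f(\mathcal{A})$.

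Next I would analyse the binary search. Write the two branches of Lemma~\ref{lem:1} as $A(\rho)=\rho+(1-1/C)f(\mathcal{A})$, which is increasing in $\rho$, and $B(\rho)=\frac{1-\epsilon}{p+1}\big((1-1/\ell-\epsilon)z_{\mathcal{A}}(S_{\text{OPT}}')-\rho m\big)+f(\mathcal{A})$, which is decreasing in $\rho$; then $f(\mathcal{A}\cup S_K)\ge\min\{A(\rho),B(\rho)\}$ for every $\rho$ generated in line~10, and this minimum is largest at the crossing point $\rho^\star$ with $A(\rho^\star)=B(\rho^\star)$, which is an explicit linear function of $f(\mathcal{A})$ and $z_{\mathcal{A}}(S_{\text{OPT}}')$. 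The offset $\gamma f(\mathcal{A})/C$ and the factor $\beta$ are calibrated so that, for the lucky $\mathcal{A}$, this $\rho^\star$ lies inside the search interval and within a $(1+\delta)$ factor of some grid value $\beta\mathcal{V}(1+\delta)^b+\gamma f(\mathcal{A})/C$; for the interval one uses $\mathcal{V}\le z_{\mathcal{A}}(S_{\text{OPT}}')\le|\mathcal{N}'|\,\mathcal{V}$ (subadditivity) and $\mathcal{V}\le\text{OPT}$, so that $b_0=\lceil\log|\mathcal{N}'|/\delta\rceil$ is large enough. The usual invariant of the binary search — it terminates with two adjacent grid values, one on the ``$E=1$'' side (where $A$ is the binding branch) and one on the ``$E=0$'' side (where $B$ binds) — together with lines~12 and~15 then gives $f(\mathcal{A}\cup S_{\mathcal{A}})\ge(1-O(\delta))\min\{A(\rho^\star),B(\rho^\star)\}$; the boundary cases in which $\rho^\star$ leaves the interval are disposed of by checking that the corresponding extreme grid value already makes one of the two branches large enough.

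Then I would substitute $A(\rho^\star)=B(\rho^\star)$, solve the linear equation for $\rho^\star$, and insert $z_{\mathcal{A}}(S_{\text{OPT}}')\ge\text{OPT}-f(\mathcal{A})$ and $f(\mathcal{A})\ge(C/r)\,\text{OPT}$; this writes $f(\mathcal{S})\ge f(\mathcal{A}\cup S_{\mathcal{A}})$ as an explicit constant times $\text{OPT}$, and rearranging produces the reciprocal-of-sum ratio of the theorem. Setting $\delta=\epsilon$ and $\ell=P+1$ makes $p=\max\{\ell-1,k\}=P$, and up to a $(1+O(\epsilon))$ factor the leading part of the ratio is $(p+1)(p+1+m)/p$, which, as a function of $p\ge k$, is minimised at $p=\sqrt{m+1}$; plugging $P=\max\{\lceil\sqrt{m+1}\rceil,k\}\le k+\lceil\sqrt{m+1}\rceil$ and $(m+1)/P\le\sqrt{m+1}$ bounds it by $k+m+3+2\sqrt{m+1}$, so with $C=1$ (dropping the nonnegative $\frac{(1-\epsilon)C}{r}$ term) the ratio is at most $(1+\epsilon)(k+m+3+2\sqrt{m+1})$. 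For monotone $f$ one repeats the argument with the sharper bound~\eqref{eq:mono} in the $E=0$ case — which removes both the $1/\ell$ loss and the random-subset step and lets one take $\ell$ with $p=k$ — and the computation collapses to $(1+\epsilon)(k+m+1)$.

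Finally, for the running time there are $O(n^C)$ feasible size-$C$ sets $\mathcal{A}$; for each, the binary search performs $\tilde O(1)$ iterations, and each call to K{\footnotesize NAPSACK}SGS sweeps $\tilde O(1/\epsilon)$ thresholds, scanning $O(n\ell)=O(nP)$ pairs $(a,i)$ with $O(1)$ oracle calls and $O(m)$ arithmetic operations per pair; multiplying and suppressing the dependence on $k,m$ in the exponent as in~\cite{feldman2020you} gives $\tilde O(Pn^{C+1}/\epsilon)$ oracle calls and $\tilde O(Pmn^{C+1}/\epsilon)$ arithmetic operations. I expect the main obstacle to be the density-ratio step: verifying that $\beta$ and $\gamma$ are indeed set so that the balancing point $\rho^\star$ is hit (up to $(1+\delta)$) by the binary search \emph{for the lucky $\mathcal{A}$}, and cleanly handling the out-of-range cases; the remainder of Step~3 is routine but tedious, since the numerous $(1\pm\epsilon)$, $1/\ell$ and $\epsilon\mathcal{V}$ error terms must all be folded into a single $(1+\epsilon)$ factor at the end.
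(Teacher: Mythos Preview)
Your proposal is correct and follows essentially the same approach as the paper: the same greedy choice of $\mathcal{A}\subseteq S_{\text{OPT}}$, the same two consequences $f(\mathcal{A})\ge (C/r)\,\text{OPT}$ and $z_{\mathcal{A}}(S_{\text{OPT}}')\ge\text{OPT}-f(\mathcal{A})$, the same balancing of the two branches of Lemma~\ref{lem:1} at $\rho^\star$, the same parameter choices, and the same time-complexity count. The only cosmetic differences are that the paper analyses the binary search by an explicit three-case argument (either some iteration has $\rho\le\rho^\star$ with $E=0$, or $\rho\ge\rho^\star$ with $E=1$, or else $\rho^\star$ stays in the search range throughout), which is a cleaner way to discharge what you flagged as the ``main obstacle'', and the paper bounds $1/Q=P+m+2+(m+1)/P$ by a two-case split on whether $k>\lceil\sqrt{m+1}\rceil$ rather than your single inequality $P\le k+\lceil\sqrt{m+1}\rceil$.
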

\begin{proof}
SPROUT enumerates each feasible set $\mathcal{A}$ of size $C$ in line~1, thus it can find a set $\mathcal{A}$ with max-value elements in the optimal solution. In the following, we are to show that the subroutine K{\footnotesize NAPSACK}SGS can find a corresponding set $S_{\mathcal{A}}$ such that $S_{\mathcal{A}} \cup \mathcal{A}$ is good enough. 

In particular, we consider the case that $\mathcal{A}$ is the subset of the optimal solution $S_{\text{OPT}}$, containing the first $C$ elements of $S_{\text{OPT}}$ in the greedy ordering with respect to the objective function $f$, inspired by~\cite{badanidiyuru2020submodular}. All elements of $S_{\text{OPT}} \backslash \mathcal{A}$ are kept in the reduced ground set, because they have a marginal value of at most $f(\mathcal{A}) / C$ due to the greedy characteristic during its generation. Thus, $S_{\text{OPT}}\backslash \mathcal{A}$ is feasible for the reduced instance. 

In terms of Lemma~\ref{lem:1}, we derive $f(\mathcal{S}) \geq \min \{\rho + (1- 1 / C)f(\mathcal{A}), \frac{1-\epsilon}{p+1}\left( (1- 1 / \ell -\epsilon )z_{\mathcal{A}}(S_{\text{OPT}}') - \rho m \right) + f(\mathcal{A}) \}$. To maximize this lower bound, the ideal density ratio is 
$$
\rho^* = \frac{(1-\epsilon)(1- 1 / \ell -\epsilon)z_{\mathcal{A}}(S_{\text{OPT}}') + (p+1)f(\mathcal{A}) / C}{p + 1 + m(1-\epsilon)},
$$
and the corresponding value of the lower bound is 
\begin{equation*}
    \begin{aligned}
        f^*(\mathcal{S}) = & \ \frac{(1-\epsilon)(1- 1 / \ell -\epsilon)}{p+1+(1-\epsilon)m}z_{\mathcal{A}}(S_{\text{OPT}}') \\
        &+ \left(1-\frac{(1-\epsilon)m}{C\left( p+1+(1-\epsilon)m \right)}\right)f(\mathcal{A}).        
    \end{aligned}
\end{equation*}
However, $\rho^*$ is hard to be derived in practice because it contains the item $z_{\mathcal{A}}(S_{\text{OPT}}')$. Next, based on~\cite{feldman2020you}, we will show that SPROUT can use binary search in lines~8--14 to obtain a good estimation of $\rho^*$. 

SPROUT sets $\rho = \beta \mathcal{V} (1+\delta)^{\left \lfloor (b_{1} + b_{0} + 1) / 2 \right \rfloor} + \gamma f(\mathcal{A}) / C$, as shown in line~10 of Algorithm~\ref{algo:SPROUT}. By setting the correction parameter $\beta = (1-\epsilon)(1- 1 / \ell -\epsilon) / (p + 1 + m(1-\epsilon))$ and $\gamma = (p+1) / (p+1+m(1-\epsilon))$, we have 
$$
\rho = 
\frac{(1-\epsilon)(1- 1 / \ell -\epsilon)\mathcal{V}(1+\delta)^{b} + (p+1)f(\mathcal{A}) / C}{p + 1 + m(1-\epsilon)},
$$
where $b \!=\! \left \lfloor (b_{1} \!+\! b_{0} \!+ \!1) / 2 \right \rfloor$. Considering the submodularity of $z_{\mathcal{A}}$, we have $\mathcal{V} \!\leq\! z_{\mathcal{A}}(S_{\text{OPT}}') \!\leq\! |\mathcal{N}'|\mathcal{V}$. Thus, we set $b_{1}\!=\!1$ and $b_{0}\!=\!\left \lceil \log{|\mathcal{N}'|} / \delta \right \rceil$, to make the initial search range of $\rho$ contain $\rho^*$. We will show that during the binary search procedure, SPROUT can either get a solution with good approximation guarantee directly or maintain the best density ratio $\rho^*$ in the range of search (i.e., the range of $\rho$ where $b \!\in\! [b_1, b_0]$). We consider three cases for $\rho$ and $E$ as follows.
\begin{itemize}
    \item [(1)] There exists one iteration of binary search, such that $\rho \leq \rho^*$ and $E=0$. According to the analysis of the case $E=0$ in the proof of Lemma~\ref{lem:1}, we know that SPROUT has got a solution $\mathcal{S}$ satisfying $f(\mathcal{S}) \geq \frac{1-\epsilon}{p+1}\left( (1- 1 / \ell -\epsilon )z_{\mathcal{A}}(S_{\text{OPT}}') - \rho m \right) + f(\mathcal{A}) \geq \frac{1-\epsilon}{p+1}\left( (1- 1 / \ell -\epsilon )z_{\mathcal{A}}(S_{\text{OPT}}') - \rho^* m \right) + f(\mathcal{A}) = f^{*}(\mathcal{S})$, where the second inequality holds by $\rho \leq \rho^*$. 
    \item [(2)] There exists one iteration of binary search, such that $\rho \geq \rho^*$ and $E=1$. According to the analysis of the case $E=1$ in the proof of Lemma~\ref{lem:1}, we know that SPROUT has got a solution $\mathcal{S}$ satisfying $f(\mathcal{S}) \geq \rho + (1- 1 / C)f(\mathcal{A}) \geq \rho^* + (1- 1 / C)f(\mathcal{A})= f^{*}(\mathcal{S})$, where the second inequality holds by $\rho \geq \rho^*$.
    \item [(3)] If the above cases have not occurred, then $\rho \leq \rho^*$ implies $E=1$, which will increase $b_1$ to $\lfloor(b_1+b_0+1)/2\rfloor$; and $\rho \geq \rho^*$ implies $E=0$, which will decrease $b_0$ to $\lfloor(b_1+b_0+1)/2\rfloor$. Thus, $\rho^*$ is always contained in the range of binary search (i.e., the range of $\rho$ where $b \in [b_1, b_0]$), and the final $\rho$ found by SPROUT satisfies $(1\!-\!\delta)\rho^* \!\leq\! \rho \!\leq \!\rho^*$.
\end{itemize}
Combining the analyses of the above three cases, we have
\begin{align*}
    f(\mathcal{S}) \geq & \ \frac{(1-\delta)(1-\epsilon)(1- 1 / \ell -\epsilon)}{p+1+m(1-\epsilon)}z_{\mathcal{A}}(S_{\text{OPT}}') \\
    &\quad+ \left(1-\frac{\delta(p+1)+m(1-\epsilon)}{C\left( p+1+m(1-\epsilon) \right)}\right)f(\mathcal{A}).    
\end{align*}

Because $S_{\text{OPT}}\backslash {\mathcal{A}}$ is feasible for the reduced instance and $S_{\text{OPT}}'$ is an optimal one, we have $z_{\mathcal{A}}(S_{\text{OPT}}') \geq z_{\mathcal{A}}(S_{\text{OPT}}\backslash {\mathcal{A}}) =  \text{OPT}-f(\mathcal{A})$. Furthermore, the greedy choice of $\mathcal{A}$ implies $f(\mathcal{A}) \geq C \cdot \text{OPT} / |S_{\text{OPT}}|=C \cdot \text{OPT} / r$. Since $\delta = \epsilon$ and $1 - 1 / \ell - \epsilon \geq (1-1/\ell)(1-2\epsilon)$, we have
\begin{align*}
    &f(\mathcal{S}) \geq \frac{(1-\epsilon)^{2}(1-2\epsilon)(1-1/\ell)}{p+1+m}\cdot \text{OPT}  \\
    &\quad+\left(\frac{C}{r}-\frac{\epsilon(p+1)+m(1-\epsilon)+C(1-\epsilon)^{3}}{r\left( p+1+m(1-\epsilon) \right)}\right)\cdot \text{OPT}.
\end{align*}
Let $Q = (1-1/\ell)/(p+1+m)$, and we have $p=P$ by setting $\ell=P + 1$. If $k > \lceil \sqrt{1+m} \rceil$, $Q = (k+m+2+\frac{m+1}{k})^{-1} \geq (k+m+2+\sqrt{m+1})^{-1}$. Otherwise, $Q = (m+2+\lceil \sqrt{1+m} \rceil+\frac{m+1}{\lceil \sqrt{1+m} \rceil})^{-1} \geq (m+3+2\sqrt{m+1})^{-1}$. Thus, SPROUT will return a solution $\mathcal{S}$ satisfying
\begin{align}\label{eq:summary}
    &f(\mathcal{S}) \geq {\bigg (}\frac{(1-\epsilon)^{2}(1-2\epsilon)}{k+m+3+2\sqrt{m+1}} \\
    &+\!\frac{(C\!-\!\epsilon)(P+\!1)\! + \!(C-\!1)m(1-\!\epsilon) \!- \!C(1-\!\epsilon)^{3}}{r\left( P+1+m(1-\epsilon) \right)}{\bigg)} \text{OPT},\nonumber
\end{align}
i.e., a roughly $\left(\frac{1-\epsilon}{k+m+3+2\sqrt{m+1}}\!+\!\frac{(1-\epsilon)C}{r}\right)^{-1}$ approximation ratio. When $C = 1$, the approximation ratio becomes $(1+\epsilon)(k+m+3+2\sqrt{m+1})$. Moreover, when $f$ is monotone, SPROUT can provide a $(1-\epsilon)^{-3}(k+m+1)$-approximation solution according to Eq.~\eqref{eq:mono} by setting $\ell \leq k + 1$, since the monotonicity of $f$ leads to the monotonicity of $z_{\mathcal{A}}$. As the factor $(1-\epsilon)^{3}$ is in the order of $1 + O(\epsilon)$, this ratio can be rewritten as $(1+\epsilon)(k+m+1)$.

We then analyze the time complexity of SPROUT. It is easy to find that SPROUT has at most $n^C$ iterations. When $\mathcal{A}$ is fixed, we first need to reduce the ground set which costs $O(n)$ oracle calls and arithmetic operations, and the most time-consuming procedure is the calls to the subroutine K{\footnotesize NAPSACK}SGS, each of which requires $\tilde{O}(\ell |\mathcal{N'}| / \epsilon)$ oracle calls and $\tilde{O}(m\ell |\mathcal{N}'| / \epsilon)$ arithmetic operations according to Observation 22 in~\citep{feldman2020you}. As the binary search method is employed, the number of calls is $O(\log{ (b_0/b_1)})=O(\log{(\log{ |\mathcal{N'}| / \delta)}}) = \tilde{O}(1)$. Thus, we can conclude that SPROUT requires $\tilde{O}(Pn^{C+1} / \epsilon)$ oracle calls and $\tilde{O}(Pmn^{C+1} / \epsilon)$ arithmetic operations.
\end{proof}

\section{SPROUT++ Algorithm}

Though SPROUT can achieve the best approximation guarantee, the exhaustive enumeration may be too time-consuming. Thus, we propose SPROUT++, an accelerated version of SPROUT. In a nutshell, SPROUT++ improves the efficiency by only randomly enumerating good single elements. It also introduces a smooth technique into the binary search procedure to make the algorithm more robust.

As presented in Algorithm~\ref{algo:SPROUT++}, SPROUT++ specifies $C$ as $1$ and randomly picks $t_c$ feasible sets (i.e., feasible single elements) in line~3 instead of enumerating over all possible $\mathcal{A}$. On top of that, SPROUT++ omits the elements with low value in line~4, and does not delete extra elements from the ground set in line~6. Besides, it employs a smooth technique in line~17 in the search procedure, since the search range shrinks so fast that many density ratios corresponding to good solutions may be missed. Technically, we use a parameter $\mu$ to decide the search range in the next iteration instead of sharply reducing it by half.

\begin{algorithm}[t!]
    \begin{algorithmic}[1]
    \REQUIRE{Objective function $f:2^{\mathcal{N}}\rightarrow \mathbb{R}_{+}$, $k$ matroids $\mathcal{M}_i(\mathcal{N},\mathcal{I}_i)$ and $m$ cost functions $c_i:\mathcal{N} \rightarrow \mathbb{R}_{+}$}
        
    \algnewcommand\algorithmicparam{\textbf{Parameter:}}
    \algnewcommand\Param{\item[\algorithmicparam]}
    \Param{Error params $\delta, \epsilon$, correction params $\beta, \gamma$, acceleration param $\alpha$, smooth param $\mu$, counter $t_c$ and number $\ell$ of solutions}
    
    \ENSURE{A set $S$ s.t. $S \in \bigcap_{i=1}^{k} \mathcal{I}_i$ and $\forall i\in [m]$, $c_i(S) \leq 1$}

    \STATE Let $e^*$ be the feasible element $e \in \mathcal{N}$ maximizing $f(e)$.

    \WHILE{$t_c > 0$}

    \STATE Randomly select a feasible single-element set $\mathcal{A} \subseteq \mathcal{N}$ \par \hskip -1em never being chosen before.

    \IF{$f(\mathcal{A}) \geq (1 - \alpha) f(e^*)$}

    \STATE $z_{\mathcal{A}}(S) \triangleq f(S|\mathcal{A})$.

    \STATE $\mathcal{N}' \triangleq \{e \in \mathcal{N} | e \notin \mathcal{A} \}$.

    \STATE $\mathcal{M}'_i(\mathcal{N}',\mathcal{I}'_i) \triangleq $ contraction of $\mathcal{M}_i(\mathcal{N},\mathcal{I}_i)$ by $\mathcal{A}$.
    
    \STATE $\mathcal{I}' \triangleq \bigcap_{i=1}^{k}\mathcal{I}'_i$.
    
    \STATE Decrease knapsack budgets by $c_i(\mathcal{A})$ and \par normalize each of them to 1.

    \STATE $\text{Let } S_0 = \emptyset$, and $\mathcal{V}$ be the maximum $z_{\mathcal{A}}$ value of \par \hskip 0em a single feasible element in $\mathcal{N}'$.
    
    \STATE $\text{Let } b_{1}=1 \text{ and } b_{0}=\left \lceil \log{|\mathcal{N}'|} / \delta \right \rceil$.
    
    \WHILE{$| b_{1} - b_{0} | > 1$}

    \STATE $b = \left \lfloor (b_{1} + b_{0} + 1) / 2 \right \rfloor$.
    
    \STATE $\rho = \beta \mathcal{V} (1+\delta)^{b} + \gamma f(\mathcal{A})$.
    
    \STATE $S_{K}\! = \!\text{K}\text{\footnotesize{NAPSACK}}\text{SGS}(z_{\mathcal{A}},\!\mathcal{N}'\!,\!\mathcal{I}'\!,\!\{c_i\}_{i=1}^m,\!\ell,\!\rho,\!\epsilon)$.

    \STATE Add $S_K$ to $S_0$.
    
    \STATE $b_{E} = b + (1 - 2E)(1 - 1 / \mu) | b_{E} - b|$.
    
    \ENDWHILE
    
    \STATE $S_{\mathcal{A}} = \mathop{\arg\max}_{S \in S_0} f(S)$.

    \STATE $t_c = t_c - 1$.

    \ENDIF
    
    \ENDWHILE

    \STATE $\mathcal{A}^* = \mathop{\arg\max}_{\mathcal{A}} f(\mathcal{A} \cup S_{\mathcal{A}})$ over all feasible $\mathcal{A} \subseteq \mathcal{N}$.
    
    \RETURN{$\mathcal{A}^* \cup S_{\mathcal{A}^*}$}
    
    \caption{SPROUT++}
    \label{algo:SPROUT++}
\end{algorithmic} 
\end{algorithm}

In Theorem~\ref{theo:2}, we prove that SPROUT++ can achieve a similar approximation guarantee to SPROUT with a high probability using much less time (depending on $t_c$) under an assumption in Eq.~\eqref{eq:assumption}. This assumption intuitively means that the objective value of each element in $S_{\text{OPT}}$ is relatively large, which can hold if the marginal gain of adding each element $e$ to $S_{\text{OPT}} \backslash e$ is large enough by the submodularity. It often appears in the problems of selecting small subsets from a relatively large ground set, where a small chosen subset may represent only part of the ground set and there are elements which can still contribute enough. Since SPROUT++ is for acceleration, such requirement naturally meets its large-scale application.

\begin{theorem} \label{theo:2}
For the problem in Definition~\ref{def:problem}, suppose that 
\begin{equation}\label{eq:assumption}
\forall a\in S_{\text{OPT}}, (1+\alpha)\cdot f(a) \geq f(e^*),
\end{equation}where $e^*$ is a feasible max-value element in $\mathcal{N}$ and $\alpha\! \leq \!(1\!-\!\epsilon)(p\!+\!1-\!(1\!-\!\epsilon)^{2}) / (\epsilon(p\!+\!1)\!+\!m(1\!-\!\epsilon))$. Then SPROUT++ offers an approximation ratio of $(1+\epsilon)(k+m+3+2\sqrt{m+1})$ with probability at least $1 - e^{-rt_c / n}$ using $\tilde{O}({\log^{-1}{(2\mu/(2\mu-1))}}t_cPn / \epsilon)$ oracle calls and $\tilde{O}({\log^{-1}{(2\mu/(2\mu-1))}}t_c Pmn / \epsilon)$ arithmetic operations, where $P = \max\{\lceil \sqrt{1+m} \rceil, k\}$, $r$ is the size of $S_\text{OPT}$, and $\mu$ is the smooth parameter.
\end{theorem}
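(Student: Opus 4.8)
The plan is to reduce the analysis of SPROUT++ to that of SPROUT with $C=1$ (Theorem~\ref{theo:1}), isolating the three ways SPROUT++ differs: it samples the size-one anchor $\mathcal{A}$ at random (lines~2--3) rather than enumerating it, it drops the value-based pruning of the ground set (line~6 of SPROUT++ versus line~3 of SPROUT), and it updates the search interval smoothly (line~17). I would split the argument into (i) a probabilistic step guaranteeing that, with probability at least $1-e^{-rt_c/n}$, some sampled anchor lies in $S_{\text{OPT}}$; (ii) a deterministic approximation step conditioned on that event, rerunning the proofs of Lemma~\ref{lem:1} and Theorem~\ref{theo:1} while paying only a loss governed by $\alpha$; and (iii) a running-time count built on the geometric shrinkage of the smoothed search.

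For (i): by assumption~\eqref{eq:assumption}, every $a\in S_{\text{OPT}}$ satisfies $f(a)\ge f(e^*)/(1+\alpha)\ge(1-\alpha)f(e^*)$, so each element of $S_{\text{OPT}}$ passes the filter in line~4 and is a legitimate anchor. Since line~3 draws feasible singletons uniformly without replacement, the probability that the first $t_c$ of them passing line~4 all miss $S_{\text{OPT}}$ is at most $\prod_{i=0}^{t_c-1}\frac{n-r-i}{n-i}\le(1-r/n)^{t_c}\le e^{-rt_c/n}$ (without-replacement sampling, and certain success when fewer than $t_c$ elements pass line~4, only help). Conditioning on the complement, some iteration uses an anchor $\mathcal{A}=\{a_0\}$ with $a_0\in S_{\text{OPT}}$; everything below is deterministic.

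For (ii): because line~6 of SPROUT++ keeps all of $\mathcal{N}\setminus\mathcal{A}$, the set $S_{\text{OPT}}\setminus\mathcal{A}$ survives into $\mathcal{N}'$ and stays feasible for the reduced instance (matroid contraction and reduced budgets behave exactly as in SPROUT), so $z_{\mathcal{A}}(S_{\text{OPT}}')\ge\text{OPT}-f(\mathcal{A})$. The $E=0$ branch of Lemma~\ref{lem:1}'s argument carries over verbatim. In the $E=1$ branch, the pruning-based inequality $z_{\mathcal{A}}(a)\le f(\mathcal{A})$ is unavailable, but the element $a$ flagged by K{\footnotesize NAPSACK}SGS still forms a feasible set with $\mathcal{A}$, so $\{a\}$ is feasible and submodularity together with~\eqref{eq:assumption} give $z_{\mathcal{A}}(a)=f(a\mid\mathcal{A})\le f(a)\le f(e^*)\le(1+\alpha)f(\mathcal{A})$, whence $f(\mathcal{A}\cup S_K)\ge\rho-\alpha f(\mathcal{A})$. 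Re-optimising the density ratio over the two bounds $\rho-\alpha f(\mathcal{A})$ and $\frac{1-\epsilon}{p+1}((1-1/\ell-\epsilon)z_{\mathcal{A}}(S_{\text{OPT}}')-\rho m)+f(\mathcal{A})$, running the three-case dichotomy of Theorem~\ref{theo:1} for the smoothed search (still keeping the optimal density ratio $\rho^*$ inside the interval and terminating with $\rho$ within a $1+O(\epsilon)$ factor of $\rho^*$), and substituting $z_{\mathcal{A}}(S_{\text{OPT}}')\ge\text{OPT}-f(\mathcal{A})$ and the trivial $f(\mathcal{A})\ge0$, I expect the leading $\text{OPT}$-term to coincide with the one in Theorem~\ref{theo:1} while the coefficient of $f(\mathcal{A})$ becomes (up to the common positive denominator $p+1+m(1-\epsilon)$, with $\delta=\epsilon$, $\ell=P+1$, $p=P$, and the bound $(1-\epsilon)(1-1/\ell-\epsilon)\le(1-\epsilon)^2$) exactly $(1-\epsilon)(p+1-(1-\epsilon)^2)-\alpha(\epsilon(p+1)+m(1-\epsilon))$. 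The theorem's hypothesis $\alpha\le(1-\epsilon)(p+1-(1-\epsilon)^2)/(\epsilon(p+1)+m(1-\epsilon))$ is precisely what makes this coefficient nonnegative, so the $f(\mathcal{A})$-term can be discarded; the estimates $1-1/\ell-\epsilon\ge(1-1/\ell)(1-2\epsilon)$ and $Q\ge(k+m+3+2\sqrt{m+1})^{-1}$ from Theorem~\ref{theo:1} then give the ratio $(1+\epsilon)(k+m+3+2\sqrt{m+1})$.

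For (iii) and the main obstacle: the running time follows once one checks that the update $b_E=b+(1-2E)(1-1/\mu)|b_E-b|$ moves the relevant endpoint to distance $(1-1/\mu)$ times the half-interval from the midpoint, so $[b_1,b_0]$ shrinks by a factor $\frac{2\mu-1}{2\mu}$ per pass; hence the inner while loop runs $O(\log^{-1}(2\mu/(2\mu-1))\cdot\log(\log|\mathcal{N}'|/\delta))=\tilde O(\log^{-1}(2\mu/(2\mu-1)))$ times, each calling K{\footnotesize NAPSACK}SGS at $\tilde O(\ell|\mathcal{N}'|/\epsilon)=\tilde O(Pn/\epsilon)$ oracle calls and $\tilde O(Pmn/\epsilon)$ arithmetic operations, over $t_c$ outer iterations (the at most $n$ iterations failing line~4 cost $O(1)$ each). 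The step I expect to be genuinely delicate is the smoothed-search part of (ii): proving that, because only a $1/\mu$ fraction of a half-interval is discarded each time, $\rho^*$ provably remains in $[b_1,b_0]$ at every step and the loop still terminates with some queried $\rho$ satisfying $(1-O(\epsilon))\rho^*\le\rho\le(1+O(\epsilon))\rho^*$ so that cases~(1)/(2) of the dichotomy eventually fire --- and, in tandem, tracking precisely how the $-\alpha f(\mathcal{A})$ perturbation of the $E=1$ bound threads through the density-ratio optimisation, since that propagation is what pins down the stated admissible range of $\alpha$.
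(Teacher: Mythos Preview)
Your proposal is correct and follows essentially the same approach as the paper: the same probabilistic argument for hitting $S_{\text{OPT}}$, the same modification of the $E=1$ branch via $z_{\mathcal{A}}(a)\le f(e^*)\le(1+\alpha)f(\mathcal{A})$, the same re-optimised $\rho^*$, and the same $(2\mu-1)/(2\mu)$ shrinkage factor for the running time. Your route is in fact slightly cleaner in two places---you explicitly verify that every $a\in S_{\text{OPT}}$ passes the line-4 filter (the paper silently assumes this), and you discard the $f(\mathcal{A})$-term via $f(\mathcal{A})\ge 0$ once its coefficient is shown nonnegative, whereas the paper first substitutes $f(\mathcal{A})\ge\text{OPT}/(r(1+\alpha))$ and then drops the resulting nonnegative term anyway.
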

\begin{proof}
Compared with SPROUT, we find that the shrinking ratio of the search range is now reduced to $(2\mu - 1) / 2\mu$ during the search procedure of SPROUT++, implying that the time complexity is multiplied by a factor of ${\log^{-1}{(2\mu/(2\mu-1))}}$. In the following, we use $\mathcal{A}_{1},\cdots,\mathcal{A}_{t_c}$ to denote the chosen $t_c$ feasible single-element sets.


Consider that there is $\mathcal{A} \in \{\mathcal{A}_{1},\cdots,\mathcal{A}_{t_c}\}$ such that $\mathcal{A} \subseteq S_{\text{OPT}}$. According to the proof of Lemma~\ref{lem:1}, when $E=1$, $f(\mathcal{S}) \geq f(\mathcal{A}) + z_{\mathcal{A}}(A) - z_{\mathcal{A}}(a) \geq f(\mathcal{A}) +\rho- z_{\mathcal{A}}(a) \geq \rho - \alpha f(\mathcal{A})$, where the last inequality holds by $z_{\mathcal{A}}(a) \leq f(e^*) \leq (1+\alpha)f(\mathcal{A})$ due to the assumption in Eq.~\eqref{eq:assumption}; when $E=0$, we have $f(\mathcal{S}) \geq\frac{1-\epsilon}{p+1}\left( (1- 1 / \ell -\epsilon )z_{\mathcal{A}}(S_{\text{OPT}}') - \rho m \right) + f(\mathcal{A})$. Thus, we can conclude that $f(\mathcal{S}) \geq \min \{\rho - \alpha f(\mathcal{A}), \frac{1-\epsilon}{p+1}\left( (1- 1 / \ell -\epsilon )z_{\mathcal{A}}(S_{\text{OPT}}') - \rho m \right) + f(\mathcal{A})\}$. By following the proof of Theorem~\ref{theo:1} with $C=1$ and 
$$
\rho^* = \frac{(1\!-\!\epsilon)(1\!-\!1/\ell\!-\!\epsilon)z_{\mathcal{A}}(S_{\text{OPT}}')+(1\!+\!\alpha)(p\!+\!1)f(\mathcal{A})}{p+1+m(1-\epsilon)},
$$
we can derive 
\begin{align*}
    &f(\mathcal{S}) \geq  \frac{(1-\delta)(1-\epsilon)(1- 1 / \ell -\epsilon)}{p+1+m(1-\epsilon)}z_{\mathcal{A}}(S_{\text{OPT}}') \\
    &\quad+ \frac{(1-\delta)(p+1)-\alpha\left(\delta(p+1)+m(1-\epsilon)\right)}{p+1+m(1-\epsilon)}f(\mathcal{A}).        
\end{align*}
As $\mathcal{A} \subseteq S_{\text{OPT}}$, we still have $z_{\mathcal{A}}(S_{\text{OPT}}') \geq z_{\mathcal{A}}(S_{\text{OPT}} \backslash \mathcal{A}) = \text{OPT} - f(\mathcal{A})$. Note that in the proof of Theorem~\ref{theo:1}, $\mathcal{A}$ is the best element in $S_{\text{OPT}}$, and thus $f(\mathcal{A}) \geq \text{OPT}/|S_{\text{OPT}}|=\text{OPT}/r$; while $\mathcal{A}$ here can be only guaranteed from $S_{\text{OPT}}$, but we can still have $f(\mathcal{A}) \geq f(e^*)/(1+\alpha) \geq \text{OPT}/(r(1+\alpha))$ by the assumption in Eq.~\eqref{eq:assumption}. Thus, 
\begin{align*}
    & f(\mathcal{S}) \geq  \ \frac{(1-\epsilon)^{2}(1-2\epsilon)(1-1/\ell)}{p+1+m}\cdot \text{OPT}  \\
    &+\!\frac{(1\!-\!\epsilon)(p\!+\!1)\!-\!(1\!-\!\epsilon)^{3}\!-\!\alpha\left(\epsilon(p\!+\!1)\!+\!m(1\!-\!\epsilon)\right)}{r(1+\alpha)\left( p+1+m(1-\epsilon) \right)} \cdot \text{OPT},
\end{align*}
implying a $(1+\epsilon)(k+m+3+2\sqrt{m+1})$-approximation ratio, since $\alpha\! \leq \!(1\!-\!\epsilon)(p\!+\!1-\!(1\!-\!\epsilon)^{2}) / (\epsilon(p\!+\!1)\!+\!m(1\!-\!\epsilon))$.

We finally estimate the probability of the event $\mathcal{E}_0$ that $\exists \mathcal{A} \!\in \!\{\mathcal{A}_{1},\cdots,\mathcal{A}_{t_c}\}, \mathcal{A} \!\subseteq\! S_{\text{OPT}}$. Let $\mathcal{E}_1$ denote the complement of $\mathcal{E}_0$. By the selection of $\mathcal{A}_{i}$ in line~3 of Algorithm~\ref{algo:SPROUT++}, we have $\text{Pr}[\mathcal{E}_1] = \prod_{i=1}^{t_c}\text{Pr}[\mathcal{A}_i \nsubseteq S_{\text{OPT}}] \leq \prod_{i=1}^{t_c}\left(1-r / (n-i+1)\right) \leq \left(1-r / n\right)^{t_c} \leq e^{- rt_c / n}$, implying $\text{Pr}[\mathcal{E}_0] \!\geq \! 1 \!-\! e^{-rt_c / n}$. Thus, the theorem holds.
\end{proof}

\section{Empirical Study}

In this section, we empirically compare SPROUT++ with celebrated algorithms, i.e., the greedy algorithm, D{\footnotesize ENSITY}S{\footnotesize EARCH}SGS (denoted as DSSGS)~\cite{feldman2020you}, FANTOM~\cite{pmlr-v48-mirzasoleiman16} and RePeated Greedy (denoted as RP\_Greedy)~\cite{DBLP:conf/colt/FeldmanHK17}, on the applications of movie recommendation and weighted max-cut. As SPROUT++ is randomized, we repeat its run for $10$ times independently and report the average and standard deviation. We always the same setting (i.e., $t_c = n / 5$, $\alpha = 0.5$, $\mu = 1$, $\ell = 2$, $\delta = \epsilon = 0.25$, $\beta=5\times 10^{-4}$, and $\gamma=1\times 10^{-6}$) and perform the sensitivity analysis on $t_c$ and $\mu$ to show their influence, and finally compare the performance of SPROUT++ and SPROUT.

\subsubsection{Movie Recommendation.} Movie recommendation is a popular task aiming to pick representative movies. We use the set of $10473$ movies from the MovieLens Dataset, where the rating, release year, genres and feature vector~\cite{lindgren2015sparse}, of each movie are included. To select diverse movies, a non-monotone submodular objective function~\cite{lin2011class, simon2007scene} is considered, i.e., $f(S) = \left(\sum_{i \in \mathcal{N}}\sum_{j \in S} s_{i,j} - \sum_{i \in S}\sum_{j \in S} s_{i,j}\right) / n$, where $s_{i,j} =\exp{(- \lambda \cdot dist(\bm{v}_i,\bm{v}_j))}$ is the similarity between movies $i$ and $j$~\cite{badanidiyuru2020submodular}. $dist(\bm{v}_i,\bm{v}_j)$ denotes the Euclidean distance between movie feature vectors $\bm{v}_i$ and $\bm{v}_j$, and $\lambda$ is set to $4$ in our experiments. As for constraints, we set a uniform matroid for each genre limiting the number of movies in it to $2$, and limit the number of chosen movies to $10$. Moreover, we define three knapsack constraints. The first one leads to movies with higher ratings and its cost function is $c_1 = 10 - rating$ with a budget of $20$. The other two aim to pick movies released close to particular years, i.e., 1995 and 1997, hence the cost functions are $c_2 = |1995 - year|$ and $c_3 = |1997 - year|$ with a budget of $30$.

\subsubsection{Weighted Max-cut.} The max-cut problem is one of the most well-known problems in combinatorial optimization~\citep{edwards1973some}. It aims to partition the vertices of a graph into two sets so that the number of edges between them is maximized. The weighted max-cut problem considers that each edge is associated with a weight and the goal is maximizing the sum of weights of edges in cut. Formally, we study the non-monotone submodular function $f(S) = \sum_{u \in S}\sum_{v \in V \backslash S}w_{u,v}$, where $V$ is the set of vertices and $w_{u,v}$ is the weight of edge $(u, v)$. Concretely, we use a classic type of random graphs, i.e., Erdos Rény graphs~\citep{erdos1960evolution}, where the number of nodes is $n = 1000$, and the probability of each potential edge to be involved in the graph is set to $0.01$. For each edge, we generate its weight uniformly at random from $[0,1]$. As for constraints, we use a uniform matroid to limit the set size to $10$, and set the sum of degree of nodes as its knapsack cost with budget $100$. Moreover, we index the nodes and limit the sum of the last digit of nodes to $40$ as an extra knapsack constraint. 

\begin{figure}[!t]
\centering

\includegraphics[width=0.46\textwidth]{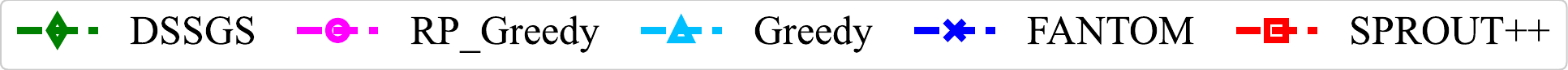}
\label{legend1}
\hfil

\subfloat[Two knapsacks]{\includegraphics[height=0.16\textwidth]{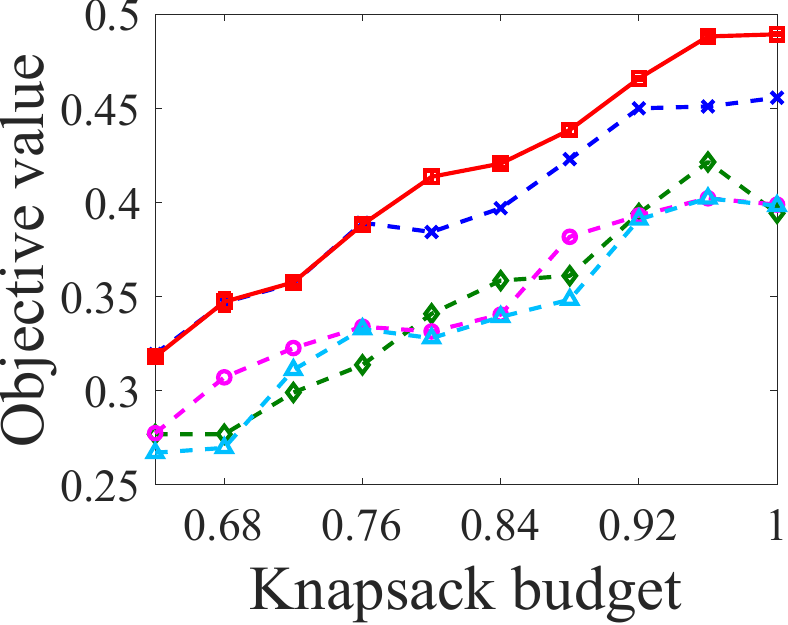}%
\label{a}}
\hfil
\subfloat[Three knapsacks]{\includegraphics[height=0.16\textwidth]{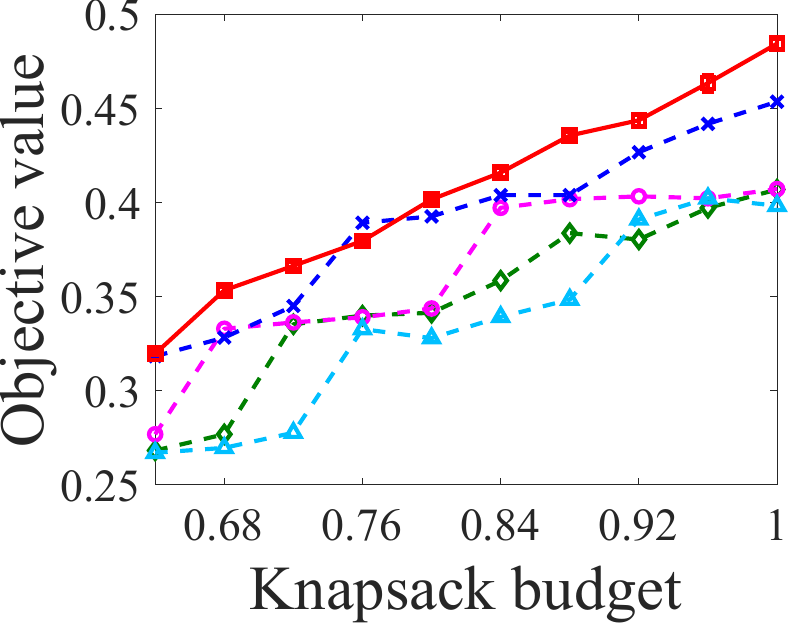}%
\label{b}}
\hfil
\\

\subfloat[Two knapsacks]{\includegraphics[height=0.16\textwidth]{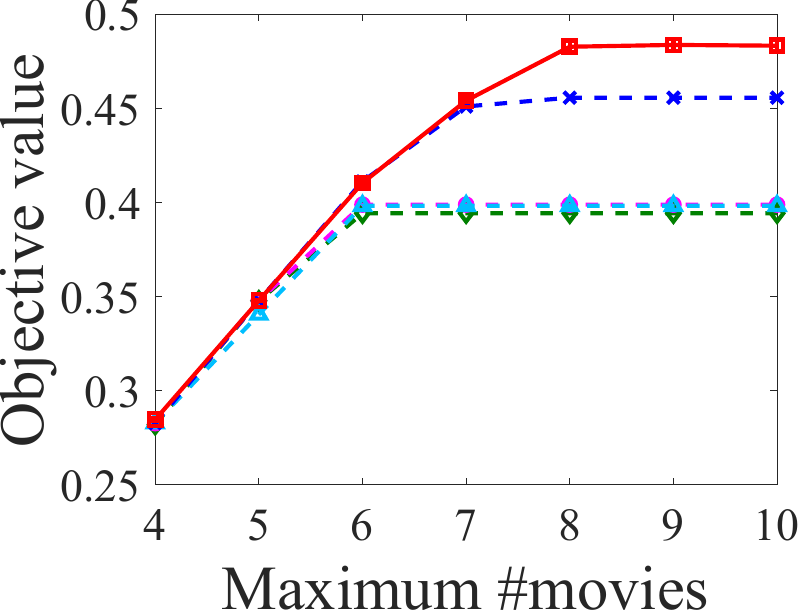}%
\label{c}}
\hfil
\subfloat[Three knapsacks]{\includegraphics[height=0.16\textwidth]{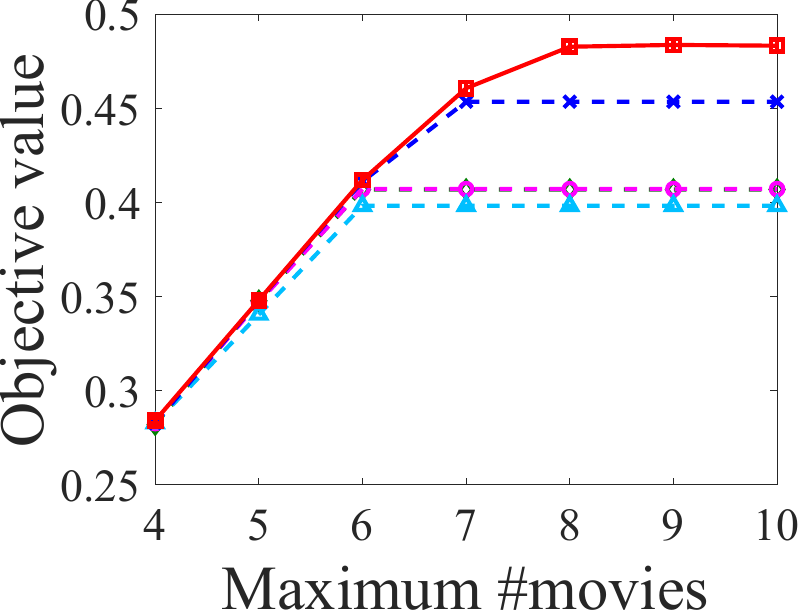}%
\label{d}} 
\hfil 
\caption{Movie Recommendation. (a) and (b): obj. value vs. knapsack budget; (c) and (d): obj. value vs. maximum \#allowed movies.} 
\label{fig1}
\end{figure} 

\begin{figure}[!t]
\centering

\includegraphics[width=0.46\textwidth]{legend.pdf}
\label{legend2}
\hfil

\subfloat{\includegraphics[height=0.155\textwidth]{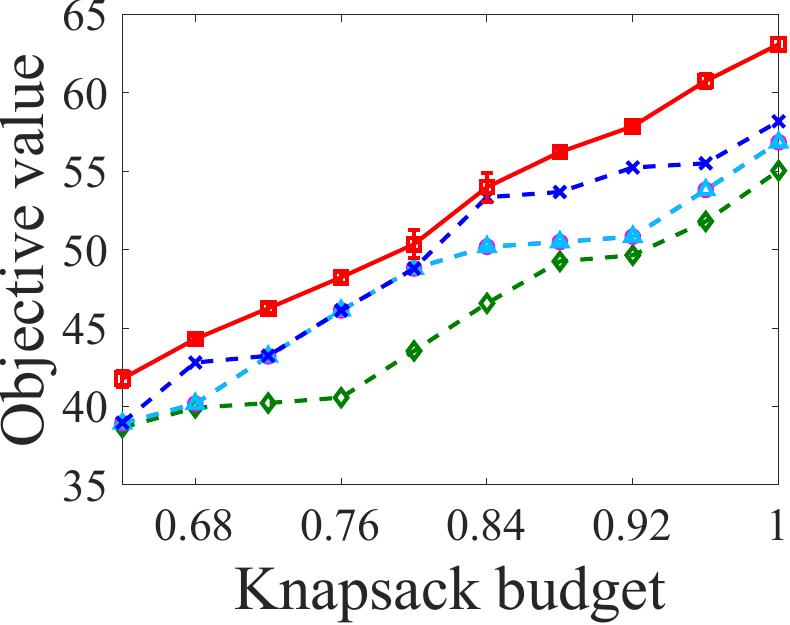}%
\label{e}}
\hfil
\subfloat{\includegraphics[height=0.155\textwidth]{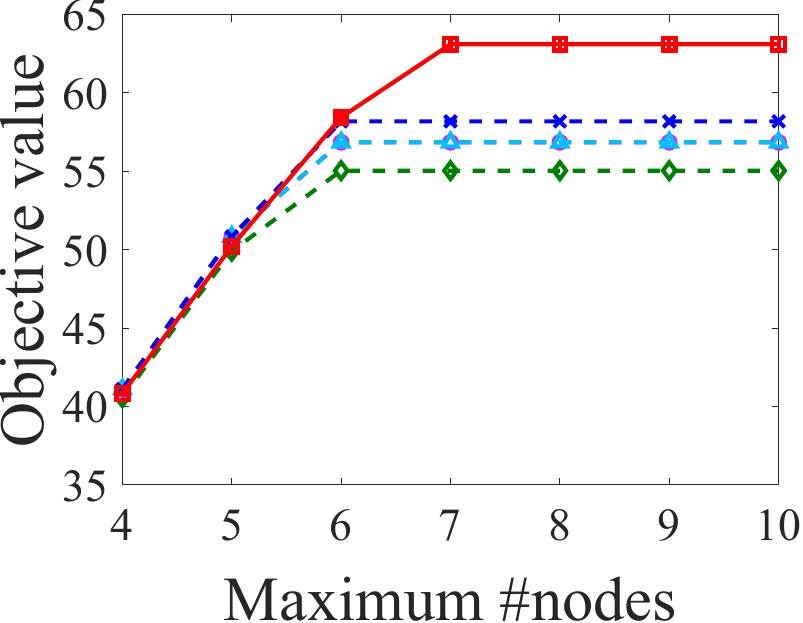}%
\label{f}} 
\hfil

\caption{Weighted Max-cut. Left: obj. value vs. knapsack budget; right: obj. value vs. maximum \#allowed nodes. } 
\label{fig2}
\end{figure} 

The results for experiments are shown in Figures~\ref{fig1} and~\ref{fig2}, respectively. We have normalized the budgets of knapsacks to 1, and then vary them from 0.64 to 1, as shown in
Figures~\ref{fig1}(a),~\ref{fig1}(b) and~\ref{fig2}(a). We also vary the maximum number of allowed movies or nodes from $4$ to $10$ in Figures~\ref{fig1}(c),~\ref{fig1}(d) and~\ref{fig2}(b). For movie recommendation, Figures~\ref{fig1}(a) and~\ref{fig1}(c) consider two knapsacks $c_1$ and $c_2$, while the others consider an extra knapsack constraint $c_3$. These results show that SPROUT++ outperforms all the celebrated algorithms in respect of the quality of chosen movies and graph-cut. Actually, the generally better performance of RP\_Greedy over DSSGS is consistent with the empirical observation in~\citep{feldman2020you}, i.e., RP\_Greedy may be better in practice than DSSGS though the latter algorithm has better worst-case approximation guarantee.

We further study the influence of parameters $t_c$ and $\mu$ empirically. As shown in Figures~\ref{fig3}(a) and ~\ref{fig3}(b), we select FANTOM as the baseline and plot the curve of objective value over the ratio of $t_c$ to $n$ and the value of $\mu$, respectively. Concretely, in Figure~\ref{fig3}(a), we test on the max-cut problem where the budget is set to 1 and the maximum number of allowed nodes is set to 10, and vary the ratio of $t_c$ to $n$ from $0.02$ to $1$. We observe that SPROUT++ immediately overwhelms the runner-up FANTOM; then the objective value returned by SPROUT++ soars with a sharp drop of the standard deviation as the ratio $t_c / n$ increases, and eventually stabilizes at a high level. Meanwhile, we perform experiments on movie recommendation with three knapsack constraints where the budget is set to 1 and the maximum number of allowed movies is set to 10. By setting $t_c=5$ and varying $\mu$ from $1$ to $5$, Figure~\ref{fig3}(b) shows that the performance of SPROUT++ raises as $\mu$ increases, and can catch up with FANTOM, even $t_c$ is extremely small. These results imply that in many cases SPROUT++ can be more efficient and effective in practice than in theoretical analysis.

Finally, we compare the performance of SPROUT++ and SPROUT. We consider the experiment in Figure~\ref{fig2}(a) for examination, and plot their objective value and number of oracle calls in Figure~\ref{fig4}. It comes out that SPROUT++ achieves competitive performance to SPROUT using much less oracle calls, showing the efficiency of SPROUT++ in practice.

\begin{figure}[!t]
\centering
\subfloat{\includegraphics[height=0.16\textwidth]{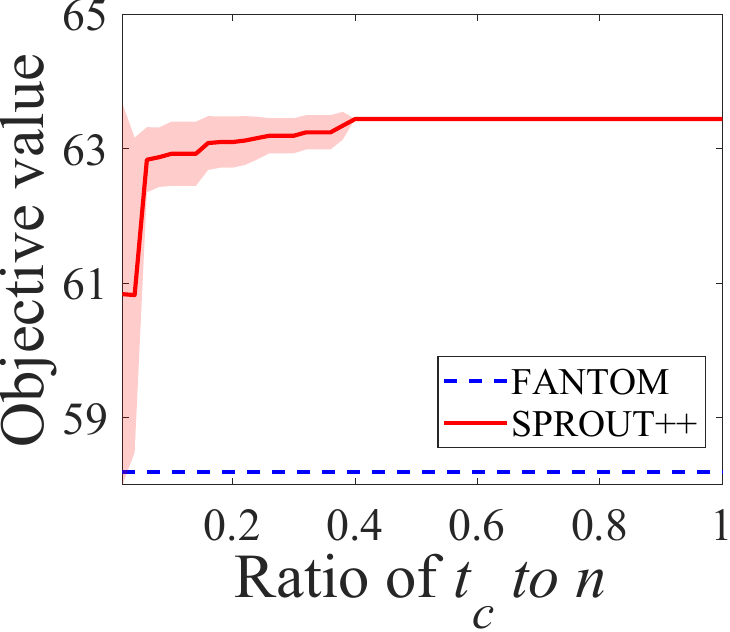}%
\label{g}}
\hfil
\subfloat{\includegraphics[height=0.16\textwidth]{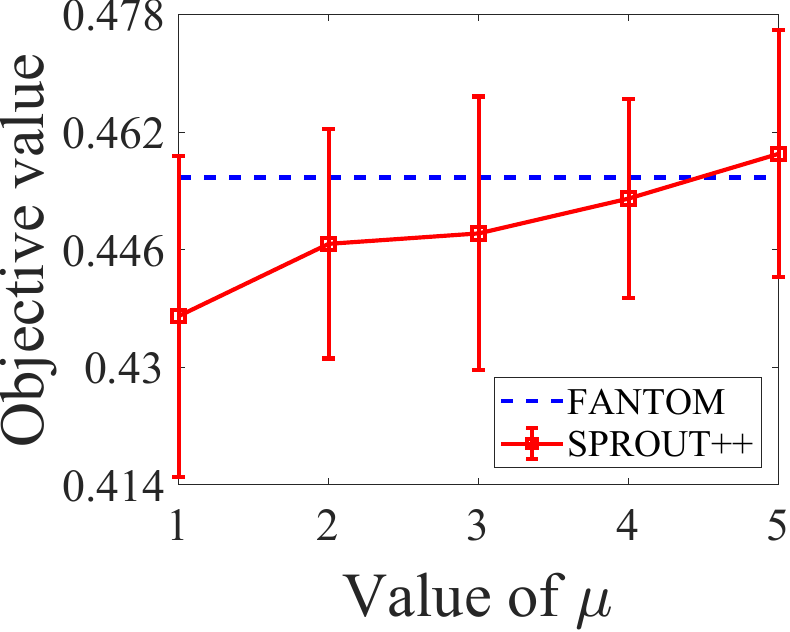}%
\label{h}}
\hfil 
\caption{Parametric sensitivity analysis on $t_c$ and $\mu$. Left: obj. value vs. $t_c / n$ on weighted max-cut; right: obj. value vs. $\mu$ on movie recommendation.}
\label{fig3}
\end{figure}

\begin{figure}[!t]
\centering
\subfloat{\includegraphics[height=0.16\textwidth]{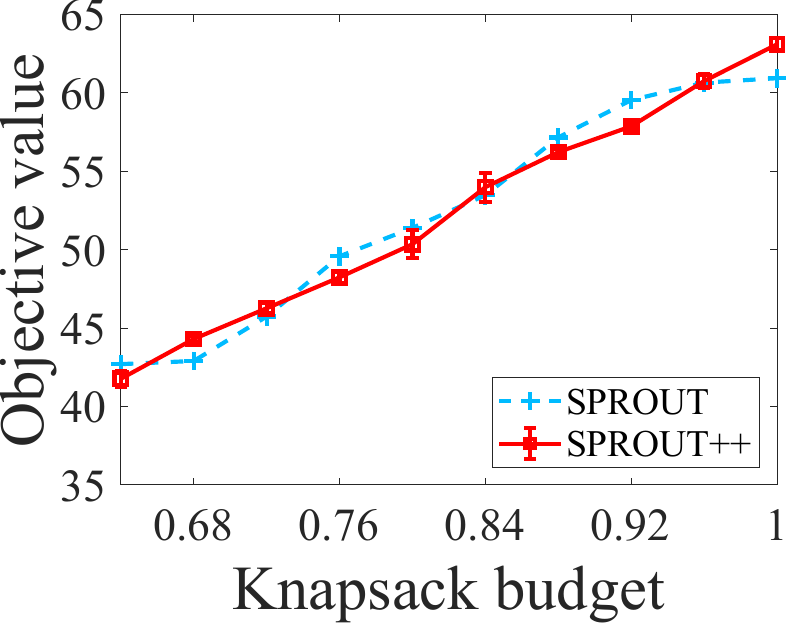}%
\label{i}}
\hfil
\subfloat{\includegraphics[height=0.17\textwidth]{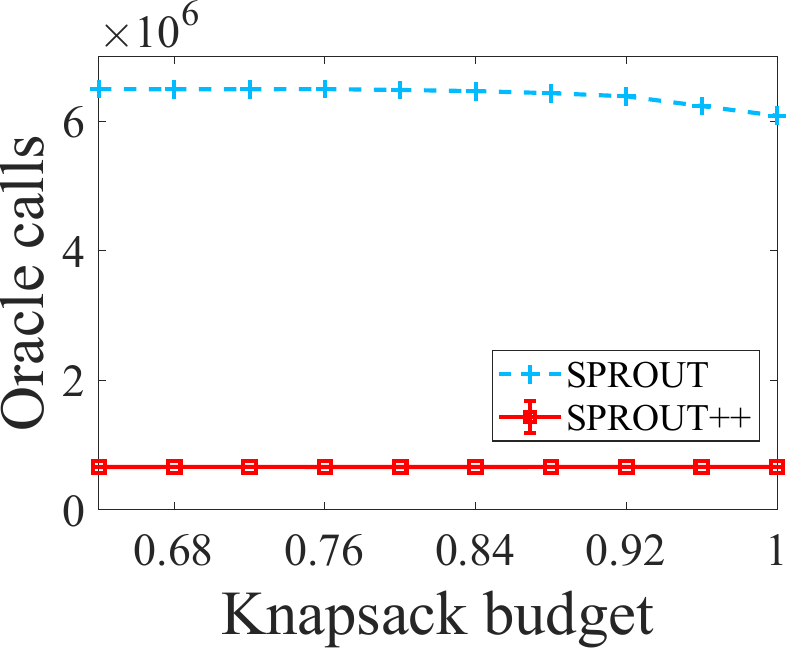}%
\label{j}}
\hfil 
\caption{Comparison of SPROUT/SPROUT++ on experiments in Figure~\ref{fig2}(a). Left: obj. value vs. knapsack budget; right: oracle calls vs. knapsack budget.}
\label{fig4}
\end{figure}

\section{Conclusion}

This paper considers the submodular maximization problem under the intersection of $k$-matroid and $m$-knapsack constraints, and proposes the SPROUT algorithm which achieves a polynomial approximation guarantee better than the best known one. Furthermore, we provide an efficient variant of SPROUT, i.e., SPROUT++, which can still achieve a similar approximation guarantee to SPROUT. Experiments on the applications of movie recommendation and weighted max-cut demonstrate the superiority of SPROUT++ over state-of-the-art algorithms. An interesting future work is to apply our technique to some more general constraints such as $k$-matchoid and $k$-system constraints.


\section{Acknowledgments}

 We want to thank Moran Feldman for the helpful suggestions. This work was supported by the NSFC (62022039, 62276124), the Jiangsu NSF (BK20201247), and the project of HUAWEI-LAMDA Joint Laboratory of Artificial Intelligence. Chao Qian is the corresponding author.

\bibliography{aaai23}

\end{document}